\newtheorem{theorem}{Theorem}
\title{A (Simplified) Supreme Being Necessarily Exists, says the
  Computer: Computationally Explored Variants of Gödel's Ontological Argument}
\author{
    Christoph Benzmüller
    \affiliations
    Freie Universität
    Berlin
    \emails
    c.benzmueller@fu-berlin.de
}
\begin{document}

\maketitle

\begin{abstract}
  An approach to universal (meta-)logical reasoning in classical
  higher-order logic is employed to explore and study simplifications
  of Kurt Gödel's modal ontological argument. Some argument premises are
  modified, others are dropped, modal collapse is avoided and validity
  is shown already in weak modal logics \textbf{K} and \textbf{T}. Key
  to the gained simplifications of Gödel's original theory is the
  exploitation of a link to the notions of filter and ultrafilter in topology.

  The paper illustrates how modern knowledge representation and
  reasoning technology for quantified non-classical logics can
  contribute new knowledge to other disciplines. The contributed material is
  also well suited to support teaching of non-trivial logic formalisms in
  classroom.
\end{abstract}

\section{Introduction}
Variants of Kurt Gödel's \shortcite{GoedelNotes}, resp.~Dana Scott's
\shortcite{ScottNotes}, modal ontological argument have previously
been studied and verified on the computer by Benzmüller and
Woltzenlogel-P.~\shortcite{C40,C55} and Benzmüller and Fuenmayor
\shortcite{J52}, and some previously unknown issues were
revealed in these works,\footnote{E.g.,~the theorem prover \textsc{Leo-II}
  detected that Gödel's
  \shortcite{GoedelNotes} variant of the argument is inconsistent; 
  this inconsistency had, unknowingly, been fixed in the variant of Scott
  \shortcite{ScottNotes}; cf.~\cite{C55} for more details.}
and it was shown that logic \textbf{KB}, instead of  \textbf{S5}, is
already sufficient to derive from Gödel's axioms
that a supreme being necessarily exists.

In this paper simplified variants of Gödel's modal
ontological argument are explored and studied.  Our simplifications have been
developed in interaction with the proof assistant Isabelle/HOL
\cite{Isabelle} and by employing Benzmüller's \shortcite{J41,J23} shallow
semantical embedding (SSE) approach as enabling technology. This
technology supports the reuse of automated theorem proving (ATP) and
model finding tools
for classical higher-order logic (HOL) to represent and reason with a wide
range of non-classical logics and theories, including higher-order
modal logics (HOMLs) and Gödel's modal ontological argument, which are
in the focus of this paper.

One of the new, simplified modal arguments is as follows. 
The notion
of being Godlike ($\mathcal{G}$) is exactly as in Gödel's
original work. 
Thus, a Godlike entity, by definition,  possesses all positive properties
($\mathcal{P}$ is an uninterpreted constant denoting positive properties):
$$\mathcal{G}\ x  \quad \equiv  \quad \boldsymbol{\forall} Y. (\mathcal{P}\ Y
\boldsymbol{\rightarrow} Y\ x)$$


The three only axioms of the new theory which constrain the
interpretation of Gödel's positive properties ($\mathcal{P}$) are:
\begin{description}
\item[\textsf{\small A1'}]  
Self-difference is not a positive property.\footnote{An alternative to \textsf{\scriptsize  A1'}
  would be: The empty property
  ($\lambda x. \boldsymbol \bot$) is not a positive property.}
$$\boldsymbol{\neg} \mathcal{P}  (\lambda x. x\boldsymbol{\not=}
   x) $$
\item[\textsf{\small A2'}]  
   A property entailed or necessarily entailed by a positive property
   is positive. 
$$\boldsymbol{\forall} X\ Y.\ ((\mathcal{P}\
  X \boldsymbol{\wedge} (X \boldsymbol{\sqsubseteq} Y \boldsymbol{\vee} X
  \boldsymbol{\Rrightarrow} Y)) \boldsymbol{\rightarrow} \mathcal{P}\ Y)$$
\item[\textsf{\small A3}]
  The conjunction of any collection of
  positive properties is positive.\footnote{The
    third-order formalization of \textsf{\scriptsize A3} as given here
  ($\mathcal{Z}$ is a third-order variable ranging over sets of properties) has
  been proposed by Anderson and Gettings \shortcite{AndersonGettings},
  see also Fitting \shortcite{fitting02:_types_tableaus_god}.  Axiom
  \textsf{\scriptsize A3}, together with the definition of
  $\mathcal{G}$, implies that being Godlike is a positive
  property. Since supporting this inference is the only role this
  axiom plays in the argument, $\mathcal{(P\,G)}$ can
  be taken (and has been taken; cf.~Scott \shortcite{ScottNotes}) as an
  alternative to our \textsf{\scriptsize A3}; cf.~also Fig.~\ref{fig:SimpleVariantPG}.} 
$$\boldsymbol{\forall} \mathcal{Z}. (\mathcal{P}os\ \mathcal{Z}
 \boldsymbol{\rightarrow} \boldsymbol{\forall} X. (X \boldsymbol{\textstyle\bigsqcap} \mathcal{Z}
 \boldsymbol{\rightarrow} \mathcal{P}\ X))$$
Technical reading:  if
  $\mathcal{Z}$ is any set of positive properties, then the property
  $X$ obtained by taking the conjunction of the properties in $\mathcal{Z}$ is positive.
\end{description}

In these premises the following defined symbols are used, where $\boldsymbol{\forall}$ is
a possibilist second-order quantifier and where $\boldsymbol{\forall}^\text{E}$ is an
actualist first-order quantifier for individuals:
\begin{align*}
X \boldsymbol{\sqsubseteq} Y & \quad \equiv  \quad \boldsymbol{\forall}^\text{E} z.(X\ z
  \boldsymbol{\rightarrow} Y\ z)\\
X \boldsymbol{\Rrightarrow} Y &  \quad \equiv  \quad \boldsymbol{\Box}(X
                                \boldsymbol{\sqsubseteq} Y)\\
\mathcal{P}os\ \mathcal{Z} &  \quad \equiv  \quad \boldsymbol{\forall} X.
                             (\mathcal{Z}\ X \boldsymbol{\rightarrow}
                             \mathcal{P}\ X)\\
X \boldsymbol{\textstyle\bigsqcap}  \mathcal{Z}  &  \quad \equiv  \quad
                              \boldsymbol{\Box}\boldsymbol{\forall}^\text{E}
                              u. (X\ u \boldsymbol{\leftrightarrow}
                              (\boldsymbol{\forall} Y.\ \mathcal{Z}\ Y
                              \boldsymbol{\rightarrow}  Y\ u)) 
\end{align*}


From \textsf{\small A1'},  \textsf{\small A2'} and \textsf{\small A3}
it follows, in a few argumentation steps in modal logic \textbf{K},  that a
Godlike entity  possibly and 
necessarily exists. Modal collapse, which expresses that
there are no contingent truths and which thus eliminates the
possibility of alternative possible worlds, does
not follow from these axioms. Also monotheism is not implied. These observations should render the new theory
interesting to theoretical philosophy and theology.

Compare the above with Gödel's premises of the modal ontological
argument (in the consistent variant of Scott):

\begin{description}
\item[\textsf{\small A1}]  
One of a property or its complement is
positive.
$$\boldsymbol{\forall} X. (\boldsymbol{\neg}(\mathcal{P}\ X)
  \boldsymbol{\leftrightarrow} \mathcal{P}
  ({\boldsymbol{\rightharpoondown}}X))$$
\item[\textsf{\small A2}] 
A property necessarily 
entailed by a positive property is positive.
 $$\boldsymbol{\forall} X\ Y. ((\mathcal{P}\
  X \boldsymbol{\wedge} (X
  \boldsymbol{\Rrightarrow} Y)) \boldsymbol{\rightarrow} \mathcal{P}\
  Y)$$
\item[\textsf{\small A3}] The conjunction of any collection of positive properties
is positive (or, alternatively, being Godlike is a positive property).
 $$\boldsymbol{\forall} \mathcal{Z}. (\mathcal{P}os\ \mathcal{Z}
 \boldsymbol{\rightarrow} \boldsymbol{\forall} X. (X \boldsymbol{\textstyle\bigsqcap} \mathcal{Z}
 \boldsymbol{\rightarrow} \mathcal{P}\ X))$$

\item[\textsf{\small A4}] 
Any positive property is necessarily positive.
$$\boldsymbol{\forall} X. (\mathcal{P}\ X 
  \boldsymbol{\rightarrow} \boldsymbol{\Box}(\mathcal{P}\ X))$$ 
\item[\textsf{\small A5}] 
Necessary existence ($\mathcal{NE}$) is a
positive property.
$$\mathcal{P}\ \mathcal{NE}$$ 
\end{description}

Furthermore, axiom \textsf{\small B} is added to ensure that we
operate in logic
\textbf{KB} instead of just 
\textbf{K}.\footnote{Symmetry of the accessibility
relation \textbf{r} associated with modal $\Box$-operator can be postulated alternatively 
in our meta-logical framework.} (Remember that logic \textbf{S5} is not needed.)
$$\boldsymbol\forall \varphi. (\varphi
  \boldsymbol\rightarrow \boldsymbol\Box\boldsymbol\Diamond \varphi)$$ 

Essence ($\mathcal{E}$) and necessary existence ($\mathcal{NE}$)
are defined as (all other definitions are as before):
\begin{align*}
\mathcal{E}\ Y\ x & \quad \equiv  \quad Y\ x \boldsymbol{\wedge} \boldsymbol{\forall} Z.
                  (Z\ x \boldsymbol{\rightarrow} (Y
                    \boldsymbol{\Rrightarrow} Z)) \\
\mathcal{NE}\ x & \quad \equiv  \quad \boldsymbol{\forall} Y.
                  (\mathcal{E}\ Y\ x \boldsymbol{\rightarrow}
                  \boldsymbol{\Box} \boldsymbol{\exists}^\text{E}\ Y)
\end{align*}
Informally: Property $Y$ is an essence $\mathcal{E}$ of an entity $x$ if, and
only if,  (i) $Y$ holds for $x$ and (ii) $Y$ necessarily entails every
property $Z$ of
$x$. Moreover, an entity $x$ has the property of necessary existence
if, and
only if, the essence of $x$ is necessarily instantiated.


Using Gödel's premises as stated it can be proved automatically that a
Godlike entity possibly and necessarily exists \cite{C55}. However, modal collapse is still implied even in
the weak logic \textbf{KB}.\footnote{For more information on modal collapse (in
logic \textbf{S5}) consult Sobel \shortcite{Sobel,sobel2004logic}, Fitting
\shortcite{fitting02:_types_tableaus_god} and
Kova\v{c}~\shortcite{Kovacs2012}; see also the
references therein.}

Benzmüller and Fuenmayor \shortcite{J52} recently showed that
different modal ultrafilter properties can be deduced from Gödel's
premises. These insights are key to the argument simplifications
developed and studied in this paper: If Gödel's premises entail that
positive properties form a modal ultrafilter, then why not turning
things around, and start out with an axiom \textsf{\small U1} postulating
ultrafilter properties for $\mathcal{P}$? Then use \textsf{\small U1}
instead of other axioms for proving that a Godlike entity necessarily
exists, and on the fly explore what further simplifications of the
argument are triggered.  This research plan worked out and it led to
simplified argument variants as presented above and in the remainder.

The proof assistant Isabelle/HOL and its integrated ATP
systems have supported our exploration work surprisingly well,
despite the undecidability and high complexity of the underlying logic
setting. 
As usual, we here only present the main steps of the 
exploration process, and various interesting eureka or frustration
steps in between are dropped.

The structure of this paper is as follows: An SSE of HOML in HOL is
introduced in Sect.~\ref{sec:HOML}. This section, parts of which have
been adapted from Kirchner et al.~\shortcite{J47}, ensures that the
paper is sufficiently self-contained; readers familiar with the SSE
approach may simply skip it. Modal filter and ultrafilter are defined
in Sect.~\ref{sec:ultrafilter}. Section~\ref{sec:goedelargument}
recaps, in some more detail, the Gödel/Scott variant of the modal ontological argument from
above. Subsequently, an ultrafilter-based modal ontological argument is
presented in Sect.~\ref{sec:newargument}. This new argument is further
simplified in Sect.~\ref{sec:finalargument}, leading to our proposal
based on axioms \textsf{\small A1'}, \textsf{\small A2'} and
\textsf{\small A3} as presented before.  Further simplifications and
modifications are studied in Sect.~\ref{sec:further}, and related work
is discussed in Sect.~\ref{sec:relwork}.

Since we develop, explain and discuss our formal
encodings directly in Isabelle/HOL, some familiarity
with this proof assistant and its underlying logic HOL 
\cite{andrews2002introduction,J43} is assumed. The entire sources\footnote{See \href{https://github.com/cbenzmueller/LogiKEy/tree/master/Computational-Metaphysics/2020-KR}{[here]}, or \href{http://logikey.org}{logikey.org} $\rightarrow$
  Computational-Metaphysics. The experiments reported in this paper were conducted on
a standard notebook (2,5 GHz Intel Core i7, 16 GB memory).} of our
formal encodings are presented and explained in 
detail in this paper.

The contributions of this paper are thus manifold. Besides the novel
variants of the modal ontological argument that we contribute to
metaphysics and theology, we demonstrate how the SSE
technique, in combination with higher-order reasoning tools, can
be employed in practical studies to
explore such new knowledge. Moreover, we
contribute useful source encodings that can be reused and adapted to
teach quantified modal logics in interdisciplinary lecture
courses. 

\section{Modeling HOML in HOL} \label{sec:HOML}
Various SSEs of quantified non-classical logics in HOL have been
developed, studied and applied in related work,
cf.~Benzmüller \shortcite{J41} and Kirchner et al.~\shortcite{J47} and
the references therein.  These contributions, among
others, show
that the standard translation from propositional modal logic to
first-order (FO) logic can be concisely modeled (i.e., embedded) within
HOL theorem provers, so that the modal operator $\Box$, for
example, can be explicitly defined by the $\lambda$-term
$\lambda \varphi. \lambda w. \forall v. (R w v \rightarrow \varphi
v)$,
where $R$ denotes the accessibility relation associated with $\Box$.
Then one can construct FO formulas involving $\Box\varphi$
and use them to represent and proof theorems. Thus, in an SSE, the
target logic is internally represented using higher-order (HO) constructs
in a theorem proving system such as Isabelle/HOL.  Benzmüller and
Paulson \shortcite{J23} developed an SSE that captures quantified
extensions of modal logic. For
example, if $\forall x. \phi x$ is shorthand in HOL for
$\Pi (\lambda x. \phi x)$, then $\Box \forall x Px$ would be
represented as $\Box \Pi' (\lambda x. \lambda w. P x w)$, where $\Pi'$
stands for the $\lambda$-term
$\lambda \Phi . \lambda w . \Pi(\lambda x . \Phi x w)$, and the $\Box$
gets resolved as above.

  To see how
  these expressions can be resolved to produce the right
  representation, consider the following series of reductions:
\[  \begin{tabular}{lll} & & $\Box\forall x P x$ \\ & $\equiv$ &
    $\Box \Pi' (\lambda x. \lambda w. P x w)$\\ & $\equiv$ &
    $\Box ((\lambda \Phi . \lambda w . \Pi(\lambda x . \Phi x w))
    (\lambda x. \lambda w. P x w))$\\
    & $\equiv$ &
    $\Box (\lambda w . \Pi(\lambda x . (\lambda x. \lambda w. P x w) x
    w))$\\
    & $\equiv$ & $\Box (\lambda w . \Pi(\lambda x . P x w))$\\ &
    $\equiv$ &
    $(\lambda \varphi. \lambda w. \forall v. (R w v \rightarrow
    \varphi v)) (\lambda w . \Pi(\lambda x . P x w))$\\
    & $\equiv$ &
    $(\lambda \varphi. \lambda w. \Pi (\lambda v . R w v \rightarrow
    \varphi v)) (\lambda w . \Pi(\lambda x . P x w))$\\
    & $\equiv$ &
    $(\lambda w. \Pi (\lambda v . R w v \rightarrow (\lambda w
    . \Pi(\lambda x . P x w)) v)) $\\
    & $\equiv$ &
    $(\lambda w. \Pi (\lambda v . R w v \rightarrow \Pi(\lambda x . P
    x v)) ) $\\
    & $\equiv$ &
    $(\lambda w. \forall v . R w v \rightarrow \forall x . P x v) $\\
    & $\equiv$ & $(\lambda w. \forall v x . R w v \rightarrow P x v) $
  \end{tabular} 
\]
Thus, we end up with a representation of  $\Box\forall x P x$ in
HOL.  Of course, types are assigned to each (sub-)term of the HOL
language. We assign individual terms (such as variable $x$
above) the type \textsf{\small e}, and terms denoting worlds (such as
variable $w$ above) the type \textsf{\small i}. From such base
choices, all other types in the above presentation can actually be
inferred.


An explicit encoding of HOML in Isabelle/HOL, following the above
ideas, is presented in Fig.~\ref{fig:HOML}.\footnote{In Isabelle/HOL
  explicit type information can often be omitted due the system's
  internal type inference mechanism. This feature is exploited in our
  formalization to improve readability. However, for \emph{new}
  abbreviations and definitions, we often explicitly declare the
  types of the freshly introduced symbols. This supports a better
  intuitive understanding, and it also reduces the number of
  polymorphic terms in the formalization (heavy use of polymorphism
  may generally lead to decreased proof automation performance).}  In
lines 4--5 the base types \textsf{\small i} and \textsf{\small e} are
declared (text passages in red are comments).  Note that HOL comes with an inbuilt base type
\textsf{\small bool}, the bivalent type of Booleans. No
cardinality constraints are associated with types \textsf{\small i}
and \textsf{\small e}, except that they must be non-empty.  To keep
our presentation concise, useful type synonyms are introduced in lines
6--9.  $\sigma$ abbreviates the type
$\textsf{\small i}{\Rightarrow}\textsf{\small bool}$ ($\Rightarrow$ is the function type constructor
in HOL), and terms of type
$\sigma$ can be seen to represent world-lifted propositions, i.e.,
truth-sets in Kripke's modal relational
semantics~\cite{sep-logic-modal}. The explicit transition from modal
propositions to terms (truth-sets) of type $\sigma$ is a key aspect of the
SSE technique, and in the remainder of this article we use phrases
such as ``world-lifted'' or ``$\sigma$-type'' terms to emphasize this
conversion in the SSE approach. $\gamma$, which stands for
$\textsf{\small e}{\Rightarrow}\sigma$, is the type of world-lifted,
intensional properties. $\mu$ and $\nu$, which abbreviate $\sigma{\Rightarrow} \sigma$
and $\sigma{\Rightarrow} \sigma{\Rightarrow} \sigma$, are the types
associated with unary and binary modal logic connectives.

\begin{figure}[!tb] 
\includegraphics[width=1\columnwidth]{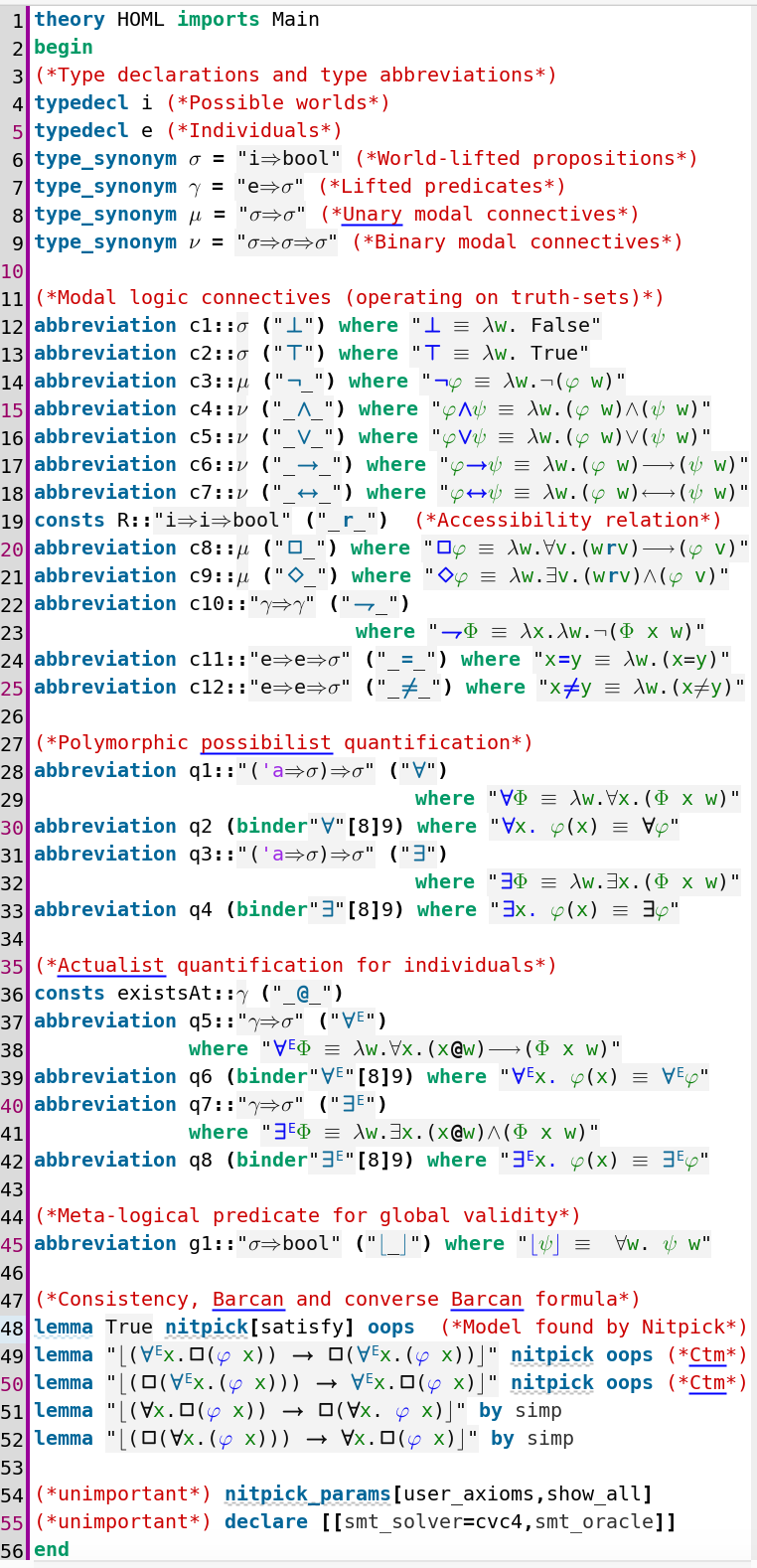}
\caption{SSE of HOML in HOL. \label{fig:HOML}}
\end{figure}




The modal logic connectives are defined in lines 12--25.
In line 16, for example, the definition of the world-lifted
$\boldsymbol\vee$-connective of type $\nu$ is given; explicit type information
is presented after the
\textsf{\small ::}-token for `\textsf{\small c5}', which is the
ASCII-denominator for the (right-associative) infix-operator $\boldsymbol\vee$ as introduced in
parenthesis shortly after. $\varphi_\sigma \boldsymbol\vee
\psi_\sigma$ is then defined
as abbreviation for the truth-set $\lambda w.  (\varphi_\sigma
w) \vee (\psi_\sigma w)$, respectively. In the remainder we generally use bold-face
symbols for world-lifted connectives (such as $\boldsymbol\vee$) in
order to rigorously distinguish them from their ordinary
counterparts (such as $\vee$) in meta-logic HOL.

Further modal logic connectives, such as $\boldsymbol\bot$,
$\boldsymbol\top$, $\boldsymbol\neg$,
$\boldsymbol\rightarrow$, and $\boldsymbol\leftrightarrow$
are introduced analogously.  The
operator $\boldsymbol\rightharpoondown$, introduced in lines 22--23, is
inverting properties of types $\gamma$; this operation occurs in 
Gödel's axiom \textsf{\small A1}.
$\boldsymbol =$ and $\boldsymbol {\not=}$ are defined in lines 24--25 as
world-independent, syntactical equality.

The world-lifted modal $\boldsymbol\Box$-operator is introduced 
 in lines 19--20; accessibility relation $R$ is now synonymously named
$\textbf{r}$ in infix notation. The  definition of
$\boldsymbol{\Diamond}$ is analogous.

The world-lifted (polymorphic) possibilist quantifier
$\boldsymbol\forall$ as discussed before is introduced in
line 28--29. In line 30, user-friendly binder-notation for
$\boldsymbol\forall$ is additionally defined. Instead of
distinguishing between $\forall$ and $\Pi'$ as in our illustrating
example, $\boldsymbol\forall$-symbols are overloaded here. The introduction of the
possibilist $\boldsymbol\exists$-quantifier is analogous.

Additional actualist quantifiers,
$\boldsymbol\forall^E$ and $\boldsymbol\exists^E$, are introduced in
lines 36--42. Their definition is guarded by an explicit, possibly
empty, \textsf{\small existsAt} (\textsf{\small @}) predicate, which encodes whether an
individual object actually ``exists'' at a particular world, or
not. The actualist quantifiers are declared
non-polymorphic, and they support quantification over individuals
only. In the remainder we will indeed apply
$\boldsymbol\forall$ and $\boldsymbol\exists$ for different types in
the type hierarchy of HOL, while $\boldsymbol\forall^E$ and
$\boldsymbol\exists^E$ exclusively quantify over individuals only.
 
Global validity of a world-lifted formula $\psi_\sigma$,
denoted as $\lfloor\psi\rfloor$, is introduced in line 45 as an
abbreviation for $\forall w_\textsf{\scriptsize i}.  \psi w$.

Consistency of the introduced concepts is confirmed by the model
finder \textit{nitpick} \cite{Nitpick} in line 48. Since only
abbreviations and no axioms have been introduced so far, the
consistency of the Isabelle/HOL theory \textsf{\small HOML} as
displayed in Fig.~\ref{fig:HOML} is actually evident.

In line 49--52 it is studied whether instances of the Barcan and the
converse Barcan formulas are implied.  As expected, both principles
are valid only for possibilist quantification, while they have
countermodels for actualist quantification.

Lines 54--55 declare some specific parameter settings for 
some of the reasoning tools that we employ.


\begin{theorem}
The SSE of HOML in HOL is faithful (for \textbf{K}).
\end{theorem}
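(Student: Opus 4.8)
The plan is to establish faithfulness in the sense customary for shallow semantical embeddings: a world-lifted formula $\varphi_\sigma$ is globally valid in all Kripke models appropriate for \textbf{K} if and only if its SSE counterpart $\lfloor\varphi\rfloor$ is valid in all Henkin models of HOL (equivalently, a theorem of HOL). Since the theory \textsf{\small HOML} of Fig.~\ref{fig:HOML} introduces the modal connectives, the $\boldsymbol\Box$-operator and both flavours of quantifier \emph{by definition} (abbreviation) only, every Henkin model of HOL is automatically a model of this theory with the defined symbols forced to their intended reading; there is no non-logical axiom to discharge, and in particular the relation constant $\textbf{r}$ is left entirely unconstrained --- which is exactly what corresponds to the class of \emph{all} Kripke frames, i.e.\ to logic \textbf{K}.

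First I would make the target semantics explicit: a Kripke model for HOML is a structure $\langle W, \textbf{r}, (D_\alpha)_\alpha, \textsf{\small existsAt}, I\rangle$ with $W\neq\emptyset$, $\textbf{r}\subseteq W\times W$ arbitrary, a non-empty individual domain $D_e$, higher domains $D_\sigma \subseteq \mathcal{P}(W)$, $D_\gamma \subseteq D_e \Rightarrow D_\sigma$, etc., and $I$ interpreting the object-level constants. From such a model one reads off a Henkin model of HOL by setting $\textsf{\small i}\mapsto W$, $\textsf{\small e}\mapsto D_e$, $\sigma\mapsto D_\sigma$, and so on, and interpreting $\textbf{r}$ and \textsf{\small existsAt} accordingly; conversely, any Henkin model of HOL restricts, by the same reading, to a Kripke model (full type domains yield the standard Kripke models, general Henkin domains the ``general'' ones, but for validity/provability this is the familiar Henkin distinction and is harmless here).

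The core of the argument is a truth lemma proved by structural induction on the world-lifted formula $\varphi$: for every corresponding pair of models, every variable assignment $g$ and every world $w\in W$, one has $\mathcal{M},w \models_g \varphi$ iff the HOL-denotation of the term $\varphi$ under $g$ --- a member of $D_\sigma$, i.e.\ a truth-set --- contains $w$. Atomic cases are immediate from the matching interpretation of constants; the propositional connectives unfold to their point-wise definitions in Fig.~\ref{fig:HOML}, so those steps are routine; the $\boldsymbol\Box$-case unfolds, via lines 19--20, to the standard translation $\forall v.\,\textbf{r}\,w\,v \rightarrow \varphi\,v$ and closes by the induction hypothesis applied at all $\textbf{r}$-successors; the possibilist quantifier $\boldsymbol\forall$ (lines 28--30), although overloaded across the type hierarchy, is handled uniformly because the HOL domain of each type $\alpha$ is by construction exactly the quantification range $D_\alpha$; the actualist quantifiers $\boldsymbol\forall^E,\boldsymbol\exists^E$ (lines 36--42) are handled likewise, with the \textsf{\small existsAt}-guard translating the varying-domain side condition. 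Applying the lemma to the global-validity operator $\lfloor\cdot\rfloor = \forall w_\textsf{\scriptsize i}.\,(\cdot)\,w$ (line 45) then shows that $\varphi$ is true at every world of $\mathcal{M}$ iff $\lfloor\varphi\rfloor$ holds in the corresponding Henkin model; quantifying over all models and invoking Henkin-completeness of HOL delivers both soundness and completeness of the embedding.

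I expect the main obstacle to be the bookkeeping around the quantifier cases --- precisely matching the polymorphic possibilist $\boldsymbol\forall$ and the guarded actualist $\boldsymbol\forall^E$ with their respective ranges on the Kripke side, and verifying that the correspondence between the two model classes is tight enough that no spurious model appears on either side. Everything else ($\boldsymbol\Box$ and the connectives) is a mechanical unfolding of the $\lambda$-definitions, exactly as in the reduction sequence displayed earlier in this section; and because \textbf{K} imposes no frame condition, no additional correspondence between modal axioms and properties of $\textbf{r}$ needs to be checked here --- that burden arises only for the stronger logics \textbf{T}, \textbf{KB}, and beyond, considered in the later sections.
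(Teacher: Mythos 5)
Your proposal is correct and follows essentially the same route as the paper, which simply defers to the faithfulness proof of Benzm\"uller and Paulson (2013); that proof proceeds exactly as you describe, by setting up a correspondence between Kripke models and Henkin models and proving a truth lemma by structural induction, with the unconstrained accessibility relation accounting for logic \textbf{K}. Your caveat about Henkin versus standard domains is the right one to flag, and it is handled the same way in the cited work.
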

\begin{proof}
Analagous to~Benzmüller and Paulson~\shortcite{J23}.
\end{proof}

Theory \textsf{\small HOML} thus models base logic \textbf{K} in
HOL. Axiom \textsf{\small B}, see above, can be postulated to
arrive at logic \textbf{KB}.


\section{Modal Filter and Ultrafilter} \label{sec:ultrafilter} Theory
\textsf{\small MFilter}, for ``modal filter'', see Fig.~\ref{fig:Ultrafilters},
imports theory \textsf{\small HOML} and adapts the topological notions
of filter and ultrafilter to our modal logic setting. For an 
introduction to the notions of filter and ultrafilter see the literature,
e.g.,~\cite{BurrisSankappanavar1981} or also \cite{odifreddi00:_ultraf_dictat_gods}.
\begin{figure}[tp] \centering
\includegraphics[width=1\columnwidth]{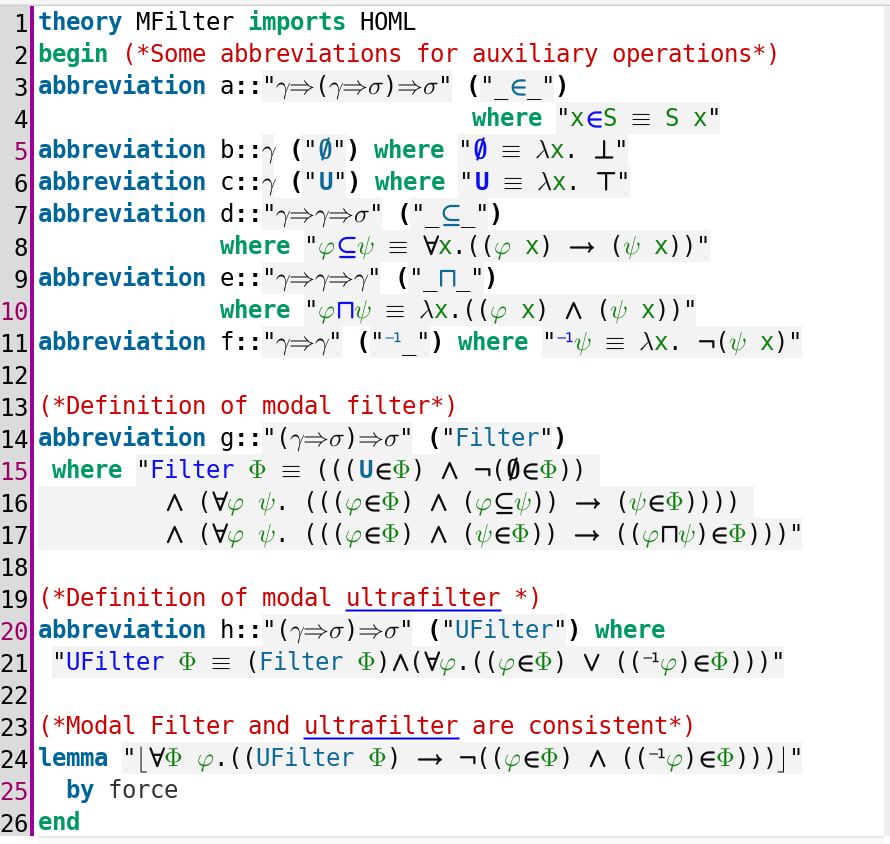}
\caption{Definition of filter and ultrafilter (for possible worlds). \label{fig:Ultrafilters}}
\end{figure}

Our notion of modal ultrafilter is introduced in lines 20--21 as a world-lifted
characteristic function of type $(\gamma{\Rightarrow}\sigma){\Rightarrow}\sigma$. A modal ultrafilter is
thus a world-dependent set of intensions of $\gamma$-type properties; in other 
words, a $\sigma$-subset of the $\sigma$-powerset of $\gamma$-type
property extensions.  A modal ultrafilter $\phi$ is defined as a modal
filter
satisfying an additional maximality condition:
$\forall \varphi.  \varphi\boldsymbol\in\phi \boldsymbol\vee
(^{-1}\varphi)\boldsymbol\in\phi$,
where $\boldsymbol\in$ is elementhood of $\gamma$-type objects in
$\sigma$-sets of $\gamma$-type objects (see lines 3--4), and where $^{-1}$
is the relative set complement operation on sets of entities (line
11).

A modal filter $\phi$, see lines 14--17, is required to
\begin{enumerate} 
\item be large: $\textbf{U}\boldsymbol\in\phi$, where $\textbf{U}$ denotes
the full set of $\gamma$-type objects we start with, 
\item exclude the empty set:
  $\boldsymbol\emptyset\boldsymbol\not\boldsymbol\in\phi$,
  where $\boldsymbol\emptyset$ is the world-lifted empty set of
  $\gamma$-type objects,
\item be closed under supersets:
  $\boldsymbol\forall \varphi\,
  \psi. (\varphi\boldsymbol\in\phi \boldsymbol\wedge
  \varphi\boldsymbol\subseteq\psi ) \boldsymbol\rightarrow
  \psi\boldsymbol\in\phi$
  (the world-lifted $\boldsymbol\subseteq$-relation  is
  defined in lines 7--8), and
\item be closed under intersections:
  $\boldsymbol\forall \varphi\,
  \psi. (\varphi\boldsymbol\in\phi \boldsymbol\wedge
  \psi\boldsymbol\in\phi) \boldsymbol\rightarrow
  (\varphi\boldsymbol\sqcap\psi)\boldsymbol\subseteq\phi$
  (where $\boldsymbol\sqcap$
  is defined in lines 9--10).
\end{enumerate}

Benzmüller and Fuenmayor \shortcite{J52} have studied
two different notions of modal ultrafilter (termed $\gamma$- and
$\delta$-ultrafilter), which are defined on intensions and extensions of
properties, respectively. This distinction is not
needed in this paper; what we call modal ultrafilter here corresponds to
their notion of $\gamma$-ultrafilter.

\section{Gödel/Scott Variant} \label{sec:goedelargument} We start out in
Fig.~\ref{fig:BaseDefs} 
with the
introduction of some basic abbreviations and definitions for the
Gödel/Scott variant of the modal ontological argument. This theory file,
which is termed \textsf{\small BaseDefs} and which imports \textsf{\small HOML}, is reused without
modification also in all other variants as explored in this paper later on.
  \begin{figure}[tp!] \centering
\includegraphics[width=1\columnwidth]{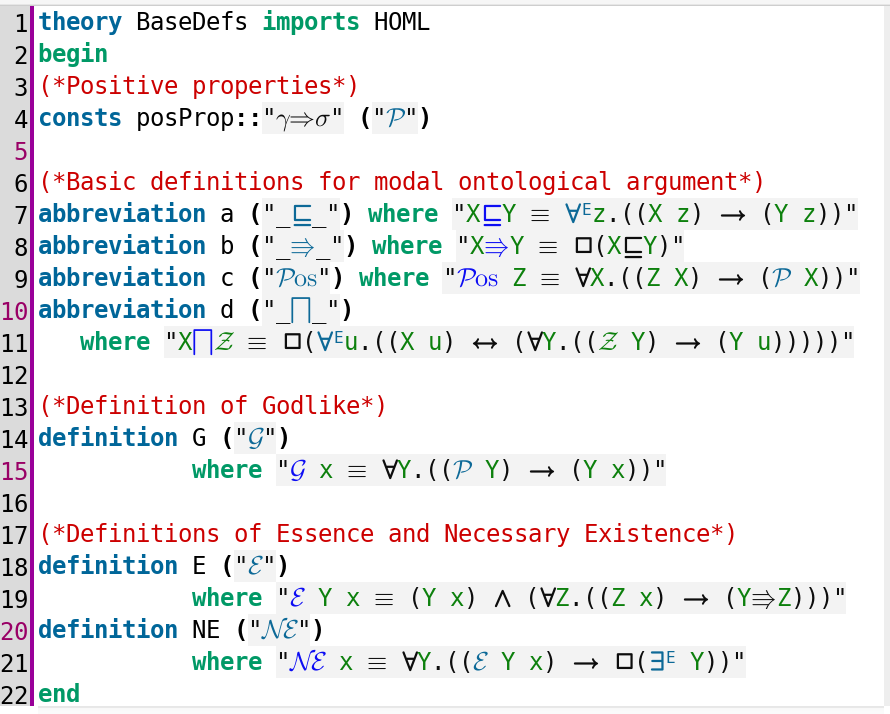}
\caption{Definitions for all variants discussed in the remainder.\label{fig:BaseDefs}}
\end{figure} 
In line 4 the uninterpreted constant symbol $\mathcal{P}$, for
``positive properties'', is declared; it has type
$\gamma{\Rightarrow}\sigma$. $\mathcal{P}$ thus denotes an intensional, world-depended
concept.  In lines 7--11 abbreviations for the previously discussed
relations and predicates $\boldsymbol{\sqsubseteq}$,
$\boldsymbol{\Rrightarrow}$, $\boldsymbol{\textstyle\bigsqcap}$ and 
$\mathcal{P}os$ are introduced. In lines 14--15, Gödel's notion
of ``being Godlike'' ($\mathcal{G}$) is defined, and in lines 18--21
the previously discussed definitions
for Essence ($\mathcal{E}$) and Necessary Existence ($\mathcal{NE}$) are given.

The full formalization of Scott's variant of Gödel's argument is
presented as theory \textsf{\small ScottVariant} in
Fig.~\ref{fig:ScottVariant}. This theory imports and relies on the
previously introduced notions from theory files \textsf{\small HOML}, \textsf{\small
  MFilter} and \textsf{\small BaseDefs}. 

The premises of Gödel's argument, as already discussed earlier,
are stated in lines 4--10. In line 12 a semantical counterpart
\textsf{\small B'}  (symmetry of the accessibility relation \textbf{r}
associated with the $\boldsymbol{\Box}$-operator) of the
\textsf{\small B} axiom is proved.

\begin{figure}[tp!] \centering
\includegraphics[width=1\columnwidth,height=.96\textheight]{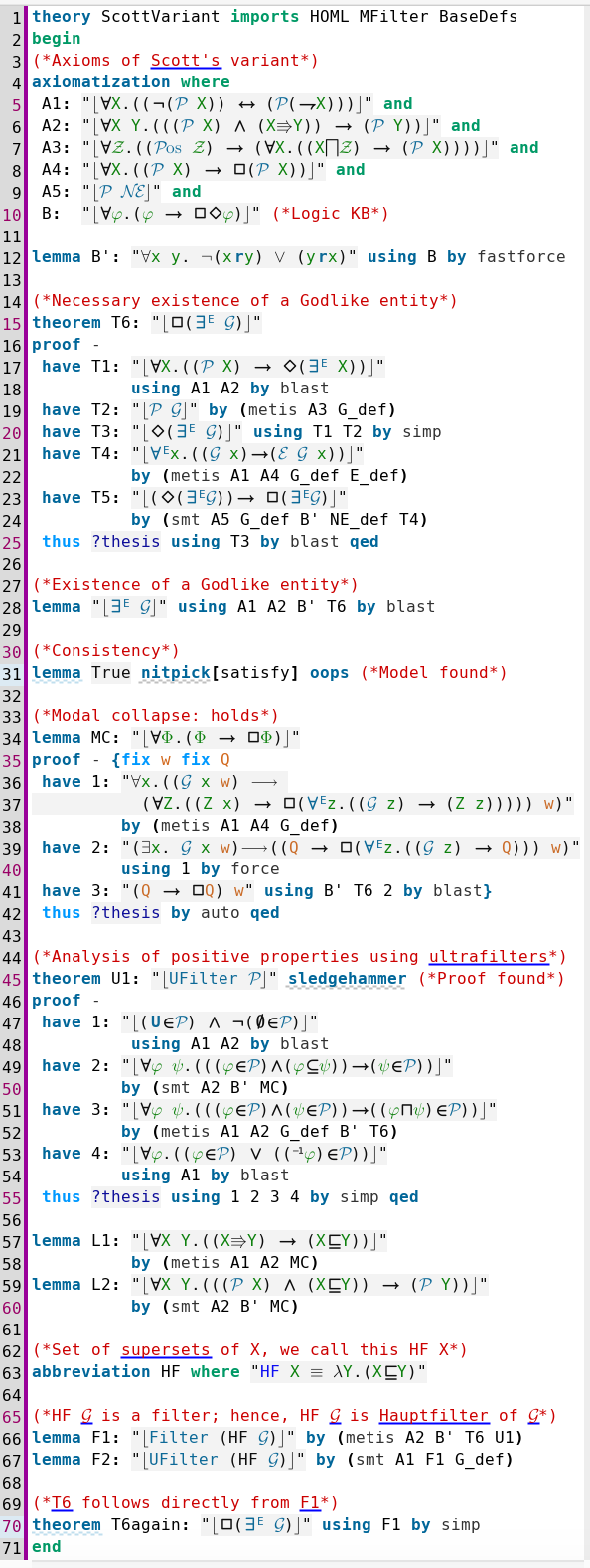}
\caption{Gödel's modal ontological argument; Scott's
  variant.\label{fig:ScottVariant}}
\end{figure} 


An abstract level ``proof net'' for theorem \textsf{\small T6}, the
necessary existence of a Godlike entity, is presented in lines
15--25. Following the literature, the proof goes as follows: From
\textsf{\small A1} and \textsf{\small A2} infer \textsf{\small T1}:
positive properties are possibly exemplified. From \textsf{\small A3}
and the defn.~of $\mathcal{G}$ obtain \textsf{\small T2}: being
Godlike is a positive property (Scott actually directly postulated
\textsf{\small T2}). 
Using \textsf{\small T1} and \textsf{\small T2}
show \textsf{\small T3}: possibly a Godlike entity exists. Next, use
\textsf{\small A1}, \textsf{\small A4}, the defns.~of $\mathcal{G}$
and $\mathcal{E}$ to infer \textsf{\small T4}: being Godlike is an
essential property of any Godlike entity. From this, \textsf{\small
  A5}, \textsf{\small B'} and the defns.~of $\mathcal{G}$, and
$\mathcal{NE}$ have \textsf{\small T5}: the possible existence of a
Godlike entity implies its necessary existence. \textsf{\small T5} and
\textsf{\small T3} then imply \textsf{\small T6}.

The five subproofs and their dependencies have been automatically
proved using state-of-the-art ATP system integrated with
Isabelle/HOL via its \textit{sledgehammer} tool; sledgehammer then
identified and returned the abstract level proof justifications as
displayed here, e.g.~``using \textsf{\small T1} \textsf{\small T2} by
\textsf{\small simp}''. The mentioned proof engines/tactics
\textsf{\small blast}, \textsf{\small metis}, and \textsf{\small simp} 
are trustworthy components of Isabelle/HOL's, since they internally
reconstruct and check each (sub-)proof in the proof assistants small
and trusted proof kernel. The \textsf{\small smt} method, which relies
on an external satisfiability modulo solver (CVC4 in our case), is
less trusted, but we nevertheless use it here since it was the only
Isabelle/HOL method that was able to close this subproof in a single
step (we want to avoid displaying longer interactive proofs due to
space restrictions).  Using
the defns.~from Sect.~\ref{sec:HOML}, one can generally
reconstruct and verify all presented proofs with pen and paper directly
in meta-logic HOL. Moreover, reconstruction of modal logic proofs from
such proof nets within direct proof calculi for quantified modal
logics, cf.~Kanckos and Woltzenlogel-P.~\shortcite{Kanckos2017VariantsOG} or
Fitting~\shortcite{fitting02:_types_tableaus_god} is also possible.

The presented theory is consistent, which is confirmed in line 31 by
model finder \textit{nitpick}; \textit{nitpick} reports a
model consisting of one world and one Godlike entity.

Validity of modal collapse (\textsf{\small MC}) is confirmed in lines 34--42; a proof
net displaying the proofs main idea is shown.

Most relevant for this paper is that the ATP systems were able to
quickly prove that Gödel's notion of positive properties $\mathcal{P}$
constitutes a modal ultrafilter, cf.~lines 45--55. This was key to the
idea of taking the modal ultrafilter property of $\mathcal{P}$ as an
axiom \textsf{\small U1} of the theory; see the next section.

In lines 57--60 some further relevant lemmata are proved.  And in line
62--70 we hint at a much simpler, alternative proof argument: Take the set
$\text{HF}\,\mathcal{G}$ of all supersets of $\mathcal{G}$; it follows
from Gödel's theory that this set is a modal filter (line 66)
resp.~modal ultrafilter (line 67), i.e.,
$\text{HF}\,\mathcal{G}$ is the modal Hauptfilter of $\mathcal{G}$.
The necessary existence of a Godlike entity now becomes a simple corollary of this
result (see line 70, where \textsf{\small T6Again} is proved
exclusively from \textsf{\small F1}).
 In Sect.~\ref{sec:Hauptfiltervariant} we will later present an argument variant that is based on this observation.%
\footnote{Manfred Droste from the University of Leipzig pointed me to this proof argument alternative in an email communication.}


\section{Ultrafilter Variant} \label{sec:newargument} 
\begin{figure}[t] \centering
\includegraphics[width=1\columnwidth]{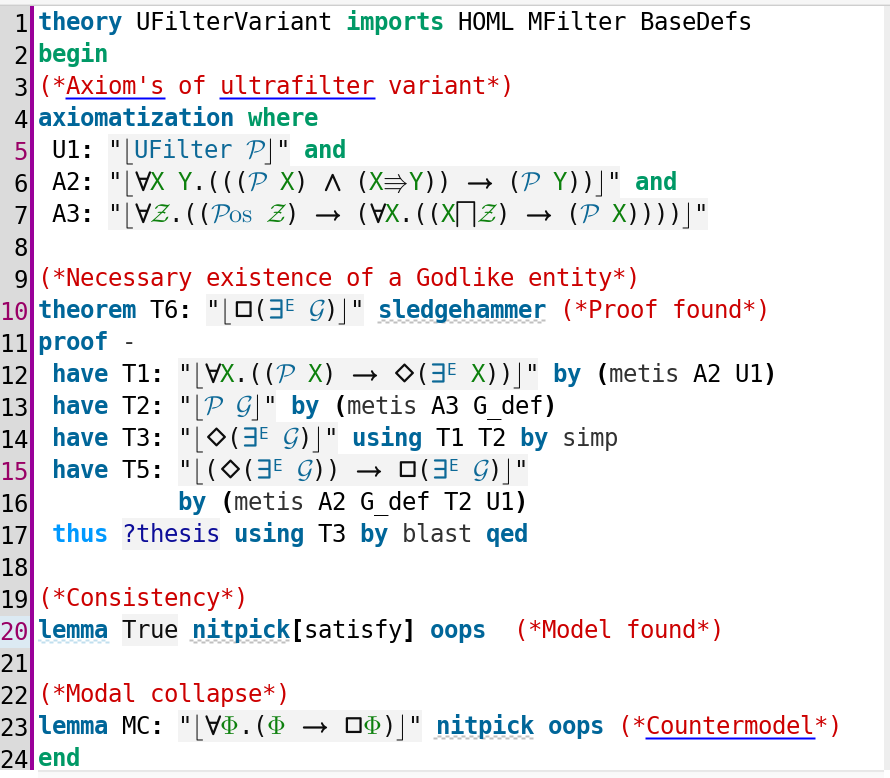}
\caption{Ultrafilter variant. \label{fig:UFilterVariant}}
\end{figure}
Taking \textsf{\small U1} as an axiom for Gödel's theory in fact
leads to a significant simplification of the modal ontological
argument; this is shown in lines 10--17 of the theory file  \textsf{\small UFilterVariant} 
in Fig.~\ref{fig:UFilterVariant}: not only Gödel's axiom \textsf{\small A1}
can be dropped, but also axioms \textsf{\small A4} and \textsf{\small A5}, together with
defns.~$\mathcal{E}$ and $\mathcal{NE}$. Even logic \textbf{KB} can be given up, since
\textbf{K} is now sufficient for verifying the proof argument.

The proof is similar to before: Use \textsf{\small U1} and
\textsf{\small A2} to infer \textsf{\small T1} (positive properties are
possibly exemplified). From \textsf{\small A3} and defn.~of
$\mathcal{G}$ have \textsf{\small T2} (being Godlike is a positive
property). \textsf{\small T1} and \textsf{\small T2} imply
\textsf{\small T3} (a Godlike entity possibly exists). From
\textsf{\small U1}, \textsf{\small A2}, \textsf{\small T2} and the
defn.~of $\mathcal{G}$ have \textsf{\small T5} (possible existence of
a Godlike entity implies its necessary existence).  Use
\textsf{\small T5} and \textsf{\small T3} to conclude \textsf{\small
  T6} (necessary existence of a Godlike entity).


Consistency of the theory is confirmed in line 20; again a model with one
world and one Godlike entity is reported.

Most interestingly, modal collapse \textsf{\small MC} now has a simple
countermodel as \textit{nitpick} informs us in line 23. This
countermodel consists of a single entity $e_1$ and two worlds $i_1$
and $i_2$ with accessibility relation 
$\boldsymbol{\textsf{r}} = \{ \langle i_1, i_1\rangle, \langle i_2,
i_1\rangle, \langle i_2, i_2\rangle\}$.
Trivially, formula $\Phi$ is such that $\Phi$ holds in $i_2$ but
not in $i_1$, which invalidates \textsf{\small MC} at world $i_2$. $e_1$ is the Godlike
entity in both worlds, i.e., $\mathcal{G}$ is the property that holds for $e_1$
in $i_1$ and $i_2$, which we may denote as
$\lambda e. \lambda w. e{=}e_1 \wedge (w{=}i_1 \vee w{=}i_2)$.
Using tuple notation we may write $\mathcal{G}=\{ \langle e_1, i_i \rangle, \langle e_1,
i_2 \rangle \}$.

Remember that $\mathcal{P}$, which is of type
$\gamma{\Rightarrow}\sigma$,  is
an intensional, world-depended concept. In our countermodel for
\textsf{\small MC}  in line 23 the
extension of $\mathcal{P}$ for world $i_1$ has the above $\mathcal{G}$ and
$\lambda e. \lambda w. e{=}e_1 \wedge w {=} i_1$ as its elements, while
in world $i_2$ we have $\mathcal{G}$ and
$\lambda e. \lambda w. e{=}e_1 \wedge w{=}i_2$. Using 
tuple notation we may note
$\mathcal{P}$ as \\[1em]
%
\hspace*{1cm} $\{ \langle \{ \langle e_1, i_1\rangle, \langle e_1, i_2\rangle \}, i_i \rangle,
\langle \{ \langle e_1, i_1\rangle  \}, i_i \rangle, $ \\
\hspace*{1cm} $\phantom{\{} \langle \{ \langle e_1, i_1\rangle, \langle e_1, i_2\rangle \}, i_2 \rangle,
\langle \{ \langle e_1, i_2\rangle  \}, i_2\rangle\}$ \\[1em]
%
In order to verify that $\mathcal{P}$ is a modal ultrafilter we have
to check whether the respective modal ultrafilter conditions are
satisfied in both worlds. 
$\textbf{U}\boldsymbol\in\mathcal{P}$ in $i_1$ and also in
$i_2$, since both
$\langle \{ \langle e_1, i_1\rangle, \langle e_1, i_2\rangle  \}, i_1\rangle$ and
$\langle \{ \langle e_1, i_1\rangle, \langle e_1, i_2\rangle  \},
i_2\rangle$ are in
$\mathcal{P}$;  
$\boldsymbol\emptyset\boldsymbol\not\boldsymbol\in\mathcal{P}$ in $i_1$ and also in
$i_2$, since
both
$\langle \{ \}, i_1\rangle$ and
$\langle \{ \}, i_2\rangle$ are not in
$\mathcal{P}$. It is also easy to verify that $\mathcal{P}$ is closed under supersets and
intersection in both worlds.

Note that in our countermodel for \textsf{\small MC}, also Gödel's
axiom $\textsf{\small A4}$ is invalidated. Consider
$X=\lambda e. \lambda w. e{=}e_1 \wedge w{=}i_2$, i.e., $X$ is true
for $e_1$ in $i_2$, but false for $e_1$ in $i_1$. We have
$\mathcal{P}\, X$ in $i_2$, but we do not have
$\Box (\mathcal{P}\, X)$ in $i_1$, since $\mathcal{P}\, X$ does not hold
in $i_1$, which is reachable in \textbf{r} from $i_2$.

\textit{nitpick} is capable of computing all \textit{partial}
modal ultrafilters as part of its countermodel exploration: out of 512
candidates, \textit{nitpick} identifies 32 structures of form
$\langle F, i\rangle$, for $i\in\{1,2\}$, in which $F$ satisfies the ultrafilter
conditions in the specified world $i$. An example for such an $\langle
F, i\rangle$ is \\[1em]
\hspace*{1cm} $\langle \{ \langle \{ \langle e_1, i_1\rangle  \}, i_i \rangle, \langle \{ \}, i_i \rangle$, \\
\hspace*{1cm} $\phantom{\langle \{} \langle \{ \langle e_1, i_1\rangle, \langle e_1, i_2\rangle \}, i_2 \rangle,
\langle \{ \langle e_1, i_2\rangle  \}, i_2\rangle\}, i_2\rangle$
\\[1em]
$F$ is not a
proper modal ultrafilter, 
since $F$ fails to be an ultrafilter in world $i_1$.


\section{Simplified Variant} 
\begin{figure}[t] \centering
\includegraphics[width=1\columnwidth]{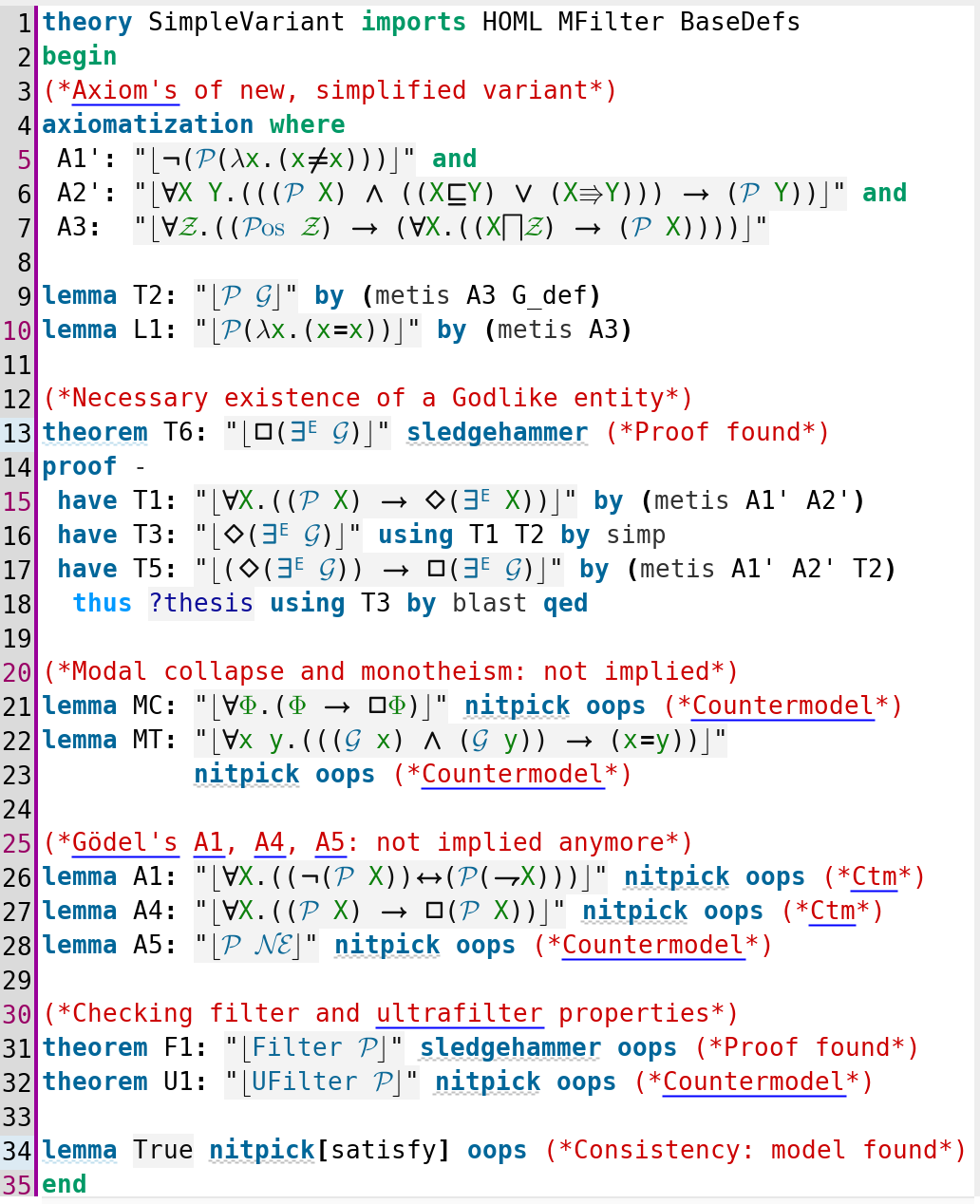}
\caption{Simplified variant. \label{fig:SimpleVariant}}
\end{figure}
\label{sec:finalargument} What modal ultrafilters properties of
$\mathcal{P}$ are actually needed to support \textsf{\small T6}? Which
ones can be dropped?  Experiments with our framework, as displayed in
theory file \textsf{\small SimpleVariant} in
Fig.~\ref{fig:SimpleVariant}, confirm that only the filter conditions from
Sect.~\ref{sec:ultrafilter} must be upheld for $\mathcal{P}$; 
maximality can be dropped. However, it is possible to merge filter condition~3
(closed under supersets) for $\mathcal{P}$ with Gödel's \textsf{\small
  A2} into axiom \textsf{\small A2'} as shown in line 6 of
Fig.~\ref{fig:SimpleVariant}.  Moreover, instead of requiring that the
empty set $\boldsymbol\emptyset = \lambda x. \boldsymbol\bot$ must not
be a positive property, we postulate that self-difference
$\lambda x. x{\boldsymbol{\not=}} x$ is not (line 5); note that
self-difference is extensionally equal to
$\boldsymbol\emptyset$. Self-identity and self-difference have been
used frequently in the history of the ontological argument, which is
part of the motivation for this switch. As intended, filter
condition~4 is now implied by the theory (see theorem \textsf{\small
  F1} proved in line 31), as well as positiveness of self-identity
(line 10). The essential idea of the theory \textsf{\small
  SimpleVariant} in Fig.~\ref{fig:SimpleVariant} is to show that it
actually suffices, in combination with \textsf{\small A3}, to
postulate that $\mathcal{P}$ is a modal filter, and this is what our
simplified axioms do.

From the definition of $\mathcal{G}$ and the axioms
\textsf{\small A1'}, \textsf{\small A2'} and \textsf{\small A3} (lines
5--7) theorem \textsf{\small T6} immediately follows: in line 13
several theorem provers integrated with \textit{sledgehammer} quickly report a
proof ($\leq$ 1sec). Moreover, a more detailed and more intuitive ``proof net'' is presented in lines
14--18; the proof argument is analogous to what has been discussed before.

In lines 21--23, 
countermodels for modal collapse \textsf{\small MC} (similar to the one discussed
before) and 
for monotheism \textsf{\small MT} are reported. 

Further questions are answered experimentally (lines 26--38): neither
$\textsf{\small A1}$, nor $\textsf{\small A4}$ or $\textsf{\small A5}$,
of the premises we dropped from Gödel's theory are implied anymore,
all have countermodels.
In lines 31--32 we see that
$\mathcal{P}$ is still a filter, but not an ultrafilter.
Since some of these axioms, e.g.~Gödel's strong $\textsf{\small A1}$, have been
discussed controversially in the history of Gödel's argument, and since
\textsf{\small MC} and \textsf{\small MT} are independent, we
have arrived at a philosophically and theologically potentially
relevant simplification of Gödel's work.

\section{Further Simplified Variants} \label{sec:further}
\subsubsection{Postulating \textsf{T2} instead of \textsf{A3}}
Instead of working with third-order axiom \textsf{\small A3} to infer
\textsf{\small T2} as in theory \textsf{\small SimpleVariant}, we directly postulate \textsf{\small T2} 
as an axiom in theory \textsf{\small SimpleVariantPG}; cf. the new axiom \textsf{\small T2}  in line 7 of  in Fig.~\ref{fig:SimpleVariantPG}.

Theorem \textsf{\small T6} can be proved essentially as before (lines 10--15), and 
\textsf{\small MC} and \textsf{\small MT} still have countermodels (lines
20--22).

\begin{figure}[t] \centering
\includegraphics[width=1\columnwidth]{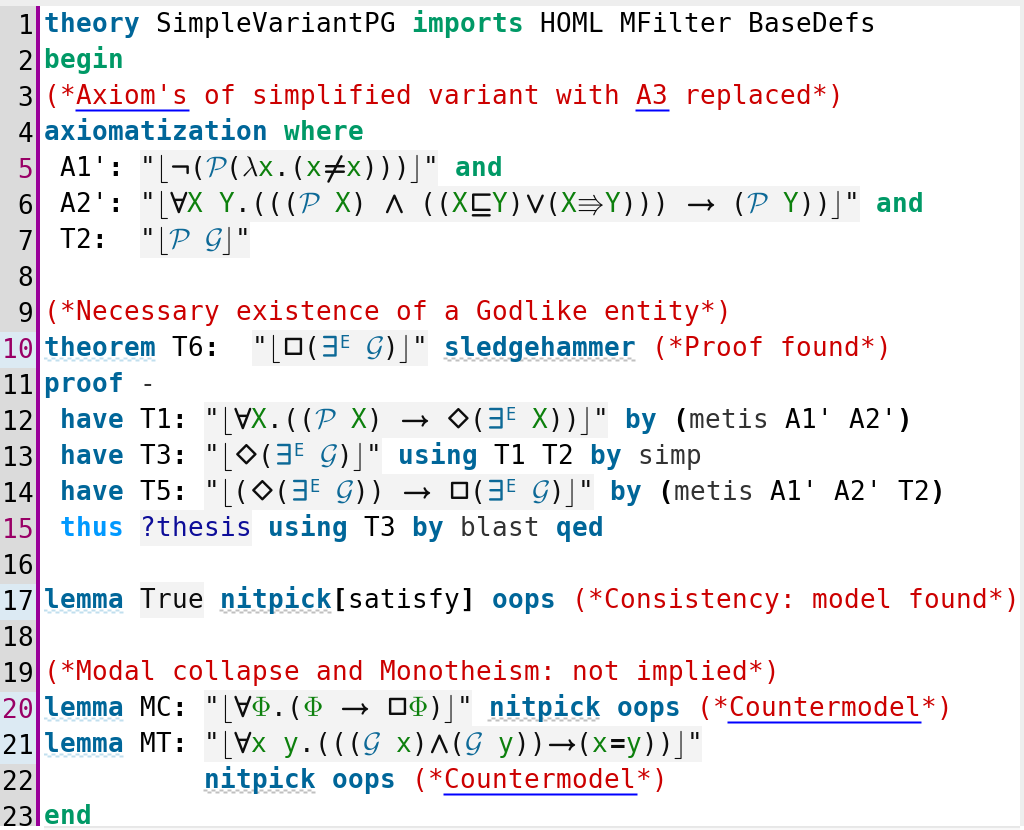}
\caption{Simplified variant with axiom \textsf{\scriptsize T2}. \label{fig:SimpleVariantPG}}
\end{figure}

\subsubsection{Simple Entailment in Axiom
  \textsf{A2'}} \label{sec:SimpleVariantSE}
\begin{figure}[t!] \centering
\includegraphics[width=1\columnwidth]{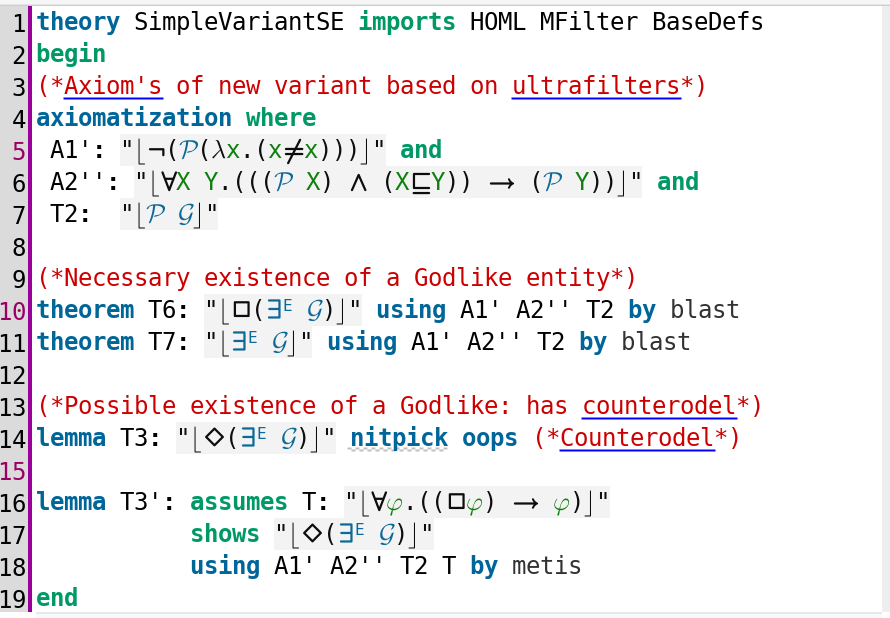}
\caption{Simplified variant with simple entailment in logic \textbf{K}. \label{fig:SimpleVariantSE}}
\end{figure}
Instead of using a disjunction of simple entailment and necessary
entailment in axiom \textsf{\small A2'} we may in fact only require
simple entailment in \textsf{\small A2'}; see axiom \textsf{\small
  A2''} (line 6) of the theory file \textsf{\small SimpleVariantSE}
displayed in
Fig.~\ref{fig:SimpleVariantSE}. 
Proofs for
\textsf{\small T6}, the necessary existence of a Godlike entity, and
also \textsf{\small T7},  
existence of a Godlike entity, can still
be quickly found   (lines 10--11). 

However, after replacing \textsf{\small A2'}
by \textsf{\small A2''}, \textsf{\small T3} (the possible existence of
a Godlike entity) is no longer implied; see line~14. As nitpick
informs us, \textsf{\small
  T3} now has an undesired countermodel
consisting of one single world that is not connected to itself.
By assuming modal axiom \textsf{\small T} (what is necessary
true is true in the given world) this countermodel can be
eliminated so that \textsf{\small T3} is implied as desired (lines 16--18).

\subsubsection{Simple Entailment in Logic \textbf{T}}
The above discussion motivates a further alternative of the
simplified modal ontological argument; see theory file \textsf{\small
  SimpleVariantSEinT} in 
Fig.~\ref{fig:SimpleVariantSEinT}.
This argument is assuming 
modal logic \textbf{T} (which comes with axiom \textsf{\small T} as
discussed above), and, as before, it postulates
axioms \textsf{\small A1'} \textsf{\small A2''} and \textsf{\small T2}
(lines 5--7):
\begin{description}
\item[\textsf{\small A1'}] Self-difference is not a positive property.
\item[\textsf{\small A2''}] A property entailed by a positive property is positive. 
\item[\textsf{\small T2}] Being Godlike is a positive property.
\end{description}

\begin{figure}[t] \centering
\includegraphics[width=1\columnwidth]{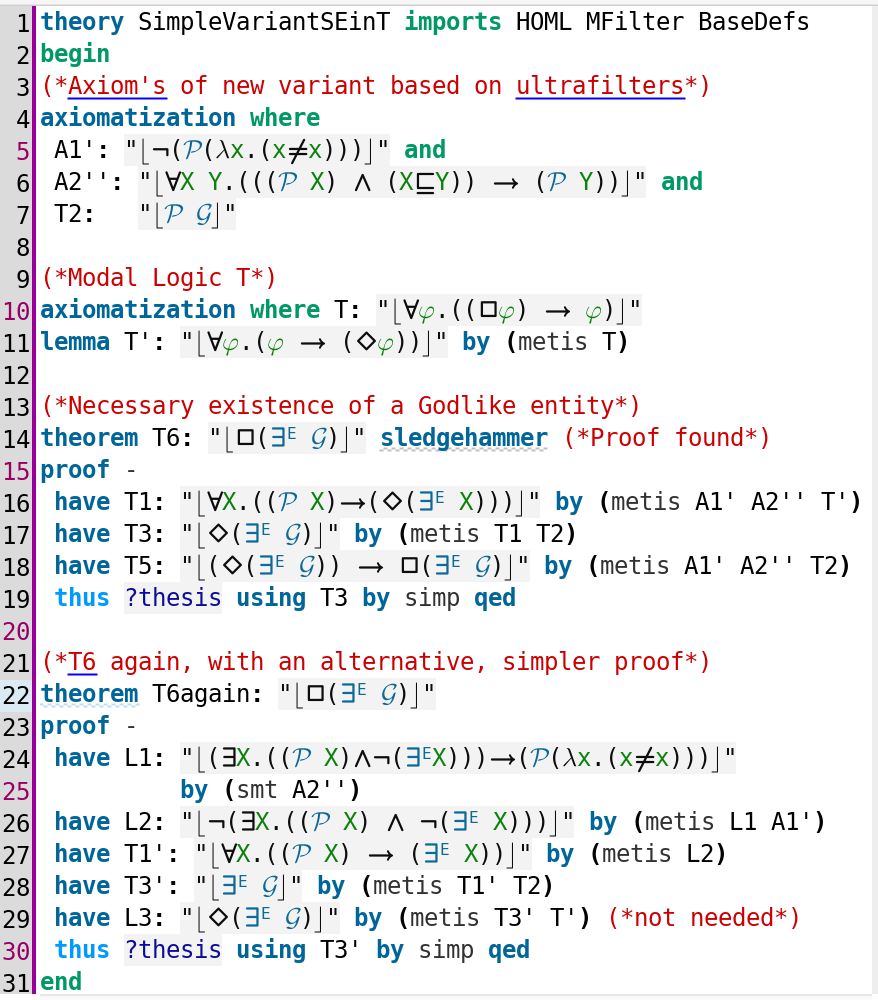}
\caption{Simplified variant with simple entailment in logic \textbf{T}. \label{fig:SimpleVariantSEinT}}
\end{figure}

One possible proof argument for \textsf{\small T6}  is as before; see lines
15--19. However, there is also a much simpler proof, see
lines 23--30, which we explain in more detail (this
simple proof is applicable to 
previous variants as well; it is also key to proving
\textsf{\small T5} from \textsf{\small A1'}, \textsf{\small A2'/A2''} and
\textsf{\small T2/A3} in previous variants, including the simplified
variant in Fig.~\ref{fig:SimpleVariant}):\footnote{The existence of such
  an unintended derivation was already hinted at by Fuenmayor
  and Benzmüller \shortcite[Footn.~11]{C65}.}
\begin{description}
\item[L1] {The existence of a non-exemplified 
positive property implies that self-difference is a positive
property}---\textit{This follows from axiom \textsf{\small A2''}.}
\item[L2] {There is no non-exemplified positive property}---\textit{From \textsf{\small L1} and axiom \textsf{\small A1'}.}
\item[T1'] {Positive properties are exemplified}---\textit{Equivalent to \textsf{\small L2}.}
\item[T3'] {A Godlike entity exists}---\textit{From\textsf{\small T1'} and axiom \textsf{\small T2}.}
\item[L3] {A Godlike entity possibly exists}---\textit{From
  \textsf{\small T3'} and \textsf{\small T'} (contrapositive of
  axiom T); note that \textsf{\small L3} is not needed to obtain
  \textsf{\small T6} in the next step; generally, axiom \textsf{\small
    T} (resp.~its contrapositive \textsf{\small T'}) is only needed
  for deriving \textsf{\small T3/L3}.}
\item[T6] {A Godlike entity necessarily exists}---\textit{From
  \textsf{\small T3'} by necessitation.}
\end{description}




\subsubsection{Hauptfiltervariant} \label{sec:Hauptfiltervariant}
\begin{figure}[t] \centering
\includegraphics[width=1\columnwidth]{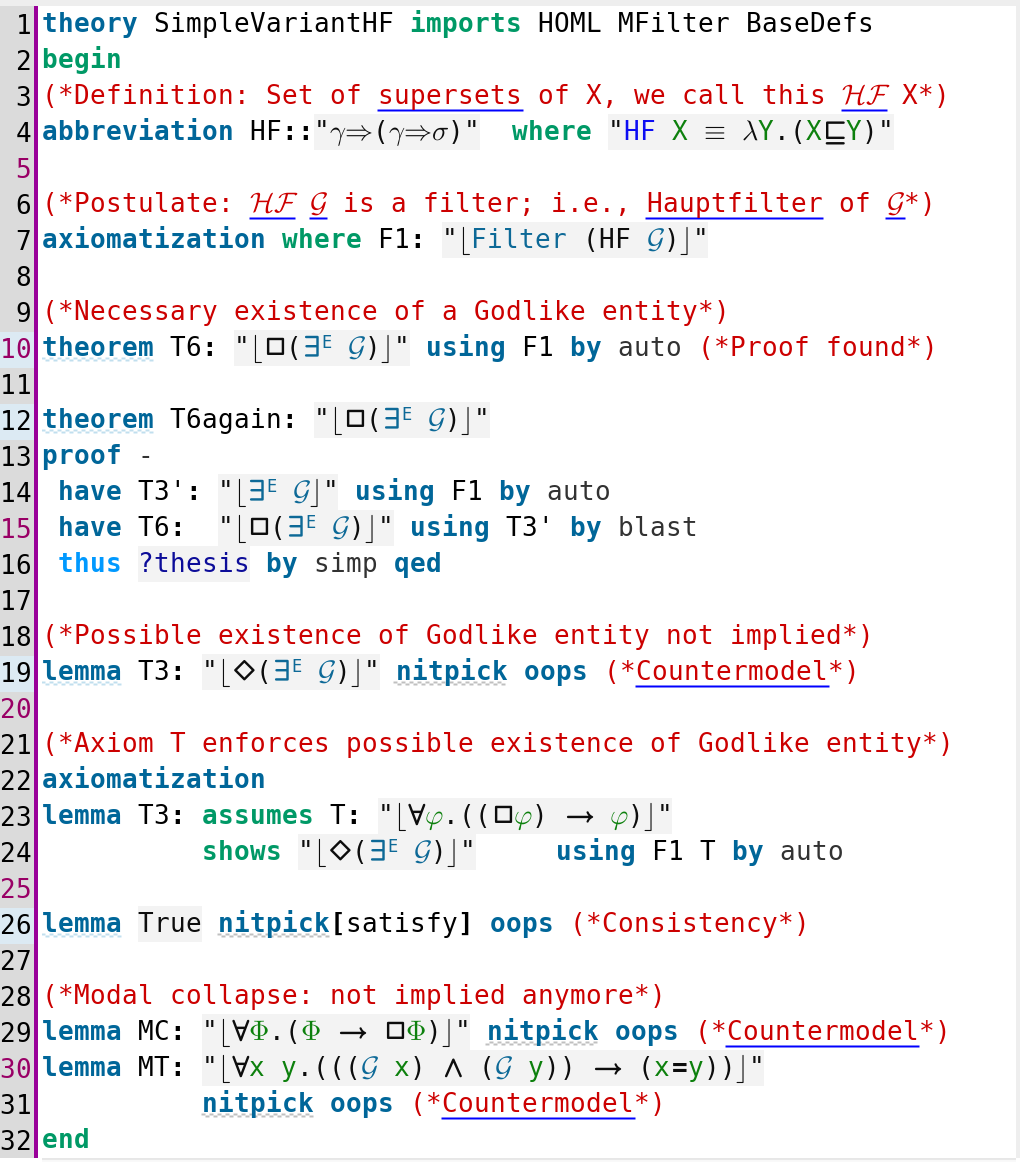}
\caption{Hauptfiltervariant. \label{fig:SimpleVariantHF}}
\end{figure}
Another drastically simplified variant of the modal ontological
argument is related 
to the observations discussed earlier at the end of
Sec.~\ref{sec:goedelargument}. There it has been shown that the set
$\text{HF}\,\mathcal{G}$, consisting of all supersets of
$\mathcal{G}$, is a modal filter (so that $\text{HF}\,\mathcal{G}$ is the 
Hauptfilter of $\mathcal{G}$); this then directly implies the
necessary existence of a Godlike entity.  A new variant based on this
observation is presented in theory file \textsf{\small
  SimpleVariantHF} in Fig.~\ref{fig:SimpleVariantHF}.  Here the set of
all supersets $\text{HF}\,\mathcal{G}$ of  property
$\mathcal{G}$ (being Godlike) is postulated to be a modal filter (axiom
\textsf{\small F1} in line 7).\footnote{In fact, the only 
essential requirement that is enforced here is that
$\boldsymbol{\emptyset}$ is not in $\text{HF}\,\mathcal{G}$, hence
$\mathcal{G}$ cannot be identical to $\boldsymbol{\emptyset}$.}
The existence and necessary existence
of a Godlike entity then directly follows from \textsf{\small F1}, see
lines 10--16. And as already  observed  before, the possible existence
of a Godlike entity is independent, but can be enforced by postulating
modal axiom \textsf{\small T} (lines 19--24). Moreover, the theory
is consistent (line 26) and neither modal collapse nor monotheism are implied
(lines 29--30).

\section{Related Work} \label{sec:relwork}

Fitting \shortcite{fitting02:_types_tableaus_god} suggested to
carefully distinguish between intensions and extensions of positive
properties in the context of G\"odel's modal ontological argument, and, in order to do
so within a single framework, he introduced a sufficiently expressive
HOML enhanced with means for the explicit representation of
intensional terms and their extensions; cf.~also the intensional
operations used by Fuenmayor and Benzmüller \shortcite{C65,J52}
in their formal study of the works of 
Fitting and
Anderson \shortcite{Anderson90,AndersonGettings}. 

The application of computational methods to philosophical problems was
initially limited to first-order theorem provers. Fitelson and Zalta
\shortcite{FitelsonZalta} used \textsc{Prover9} to find a proof of the theorems
about situation and world theory in \cite{zalta1993} and they found an
error in a theorem about Plato's Forms that was left as an exercise in
\cite{PelletierZalta}. Oppenheimer and Zalta
\shortcite{OppenheimerZalta2011} discovered, using Prover9, that one of
the three premises used in their reconstruction of Anselm's ontological
argument \cite{OppenheimerZalta1991} was sufficient to derive the
conclusion. The first-order conversion techniques that were developed
and applied in these works are outlined in some detail in related work
by Alama, Oppenheimer and Zalta \shortcite{AlamaZalta2015}.

More recent related work makes use of higher-order proof
assistants. Besides the already mentioned work of
Benzmüller and colleagues, this includes Rushby's~\shortcite{Rushby} study on Anselm's ontological argument in 
the PVS system and Blumson's \shortcite{Blumson} related study in
Isabelle/HOL.

The development of Gödel's ontological argument has recently been
addressed by Kanckos and Lethen
\shortcite{KanckosLethen19}.  They discovered previously unknown variants
of the argument in Gödel's Nachlass, whose relation to the presented simplified variants
should be further investigated in future work.\footnote{An email discussion
  (March, 2020) with Tim Lethen revealed the following: In particular
  version No.~2 of Gödel's argument as presented in
  \cite{KanckosLethen19} appears related, though not equivalent, to
 our simplified versions. Version No.~2 --which we have meanwhile
 formalized and verified in Isabelle/HOL
(cf.~Fig.~\ref{fig:No2Possibilist} in the appendix)
 -- avoids the
  notions of essence and necessary existence and associated
  definitions/axioms, just as our simplified versions do. Moreover,
  their findings also suggest that instead of axiom
  \textsf{\scriptsize A1'} we may just postulate that a non-positive
  property exists (and experiments confirm this claim;
  \textsf{\scriptsize A1'} is then implied). However, I prefer
  axiom \textsf{\scriptsize A1'}.}

Related work also includes Odifreddi's
\shortcite{odifreddi00:_ultraf_dictat_gods} discussion on ultrafilters,
dictators and God, which I was pointed to by a reviewer.

\section{Discussion}
The simplifications of Gödel's theory as presented are far reaching. In fact,
one may  ask \textit{``Is this really what 
Gödel had in mind?''}, or are there some technical
issues, such as the alternative proof in lines 22--30 from
Fig.~\ref{fig:SimpleVariantSEinT}, 
that have been ignored so far? Moreover, is the
definition of property entailment ($\sqsubseteq$) really adequate in
the context of the modal ontological argument, or shouldn't this definition  be replaced by 
concept containment, so that self-difference,
resp.~the empty property, would
no longer $\sqsubseteq$-entail any other property?\footnote{See also
  \cite{AlamaZalta2015} and \cite[Footn.~11]{C65}.}

Moreover, assuming that the simplified theory \textsf{\small
  SimpleVariant} from 
Fig.~\ref{fig:SimpleVariant} is indeed still  in line with what
Gödel had in mind, why not presenting the definitions and axioms using
an alternative wording, for example, as follows (where we replace
``positive property'' by
``rational/consistent property'' and ``Godlike entity'' by
``maximally-rational entity''):

\begin{description}
\item[$\mathcal{G}$] An entity x is maximally-rational ($\mathcal{G}$) iff it has all rational/consistent properties.
\item[\textsf{\small A1'}] Self-difference is not a rational/consistent
  property.
\item[\textsf{\small A2'}] A property entailed or necessarily entailed by a rational/consistent property is rational/consistent.
\item[\textsf{\small T2}] Maximal-rationality is a rational/consistent
  property.
\item[It follows:] A maximally-rational entity necessarily exists.
\end{description}

It would still be possible, but not mandatory, to understand a
maximally-rational being also as a Godlike being.


Independent of this discussion we expect the Isabelle/HOL theory files
we contributed to be useful for teaching quantified modal logics in
classroom, as previously demonstrated by Wisniewski et al.~\shortcite{C58}
in their awarded lecture course on ``Computational Metaphysics''. The
developed corpus of example problems is furthermore suited as a
benchmark for other ambitious knowledge representation and reasoning
projects in the KR community: Can the alternative approaches encode
such metaphysical argumants as well? How about their proof
automation capabilities and how about model finding? For example, description
logic or argumentation theory, due to their limited
expressivity, appear unsuited to support such ambitious applications, while the techniques
presented in this paper have meanwhile been used successfully for
the encoding and assessment of 
foundational theories in metaphysics~\cite{J50} and
mathematics~\cite{J40,C78}.
Moreover, our problem set constitutes an interesting benchmark for other
HOL automated theorem provers and should therefore be converted into
TPTP THF \cite{J22} representation and be used in theorem prover competitions.


\section{Conclusion}
Gödel's modal ontological argument stands in prominent tradition of
western philosophy. It has its roots in the
Proslogion of Anselm of Canterbury and it has been picked up in
Descartes' Fifth Meditation and in the works of Leibniz,
which in turn inspired and informed the work of Gödel.

In this paper we have linked Gödel's theory to a suitably adapted
mathematical theory (modal filter and modal ultrafilter), and subsequently
we have developed simplified modal ontological
arguments which avoid some of Gödel's axioms and consequences,
including modal collapse, that have led to criticism in the past. At
the same time the offered simplifications
are very far reaching, eventually too far. Anyhow, the insights
that were presented in this paper appear
relevant to inform ongoing studies of the modal ontological argument
in theoretical philosophy and theology.

We have in this paper applied modern symbolic AI techniques
to arrive at deep, explainable and verifiable models of the
metaphysical concepts we are interested in. In particular, we have
illustrated how state of the art theorem proving systems, in
combination with latest knowledge representation and reasoning
technology, can fruitfully be employed to explore and contribute deep
new knowledge to other disciplines. 



\appendix
\section*{Acknowledgements} 
I am grateful to friends and colleagues who have supported this line
of research in the past and/or directly contributed with comments
and suggestions, including (in alphabetical order) C. Brown, M. Droste, 
D. Fuenmayor, T. Gleißner, D. Kirchner, S. Kova\v{c}, T. Lethen, X.  Parent, R. Rojas, D. Scott, A. Steen,
L. van der Torre, E. Weydert, D. Streit, B. Woltzenlogel-Paleo,
E. Zalta. Moreover, I am grateful to the reviewers of this paper for
useful comments and for the pointer to P. Odifreddi's paper.


{\small
\bibliographystyle{kr}

}

\newpage
\begin{appendix}
\onecolumn
\begin{figure*}[h!] \centering
\includegraphics[width=.73\textwidth]{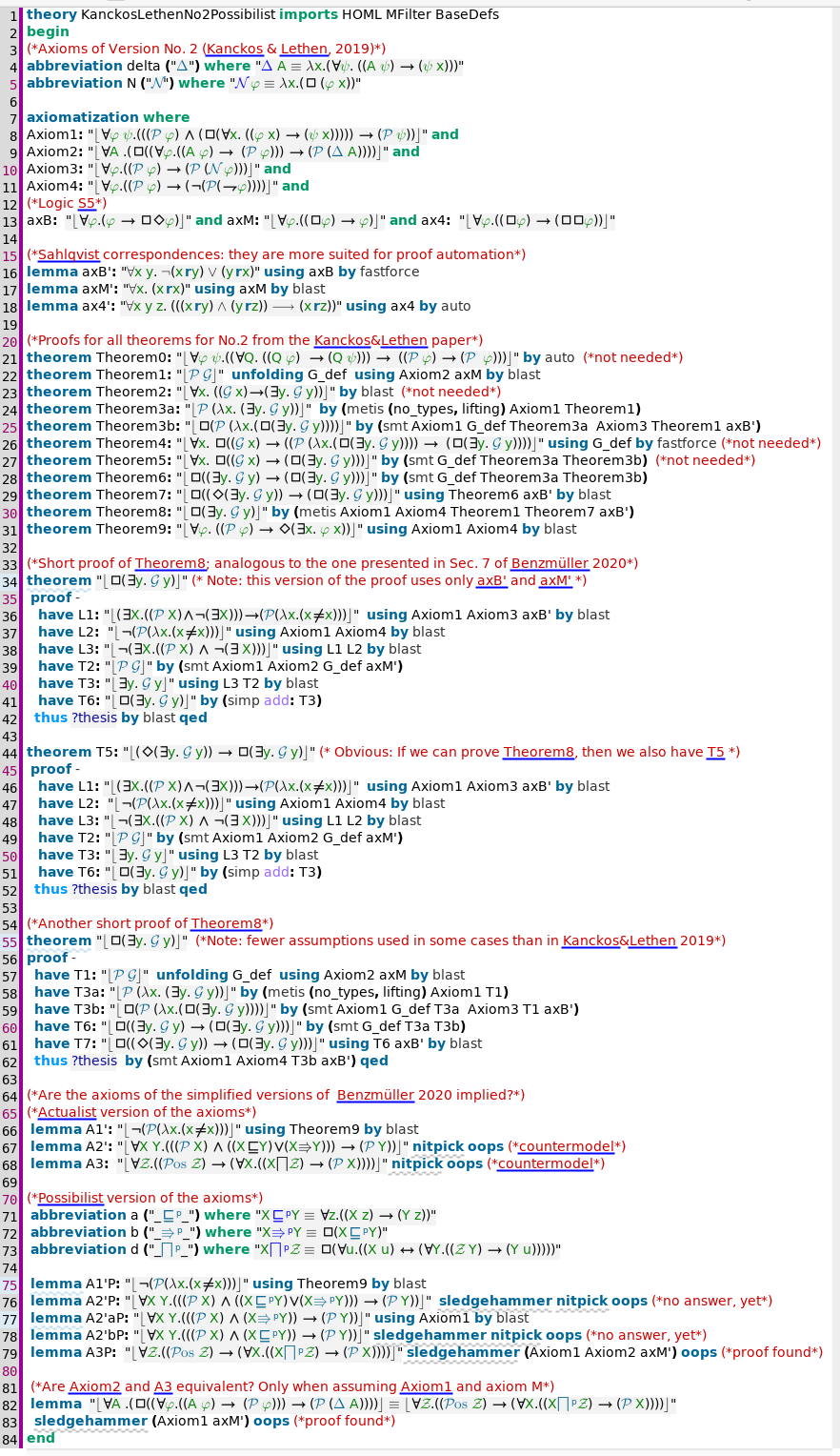}
\caption{Formal study of ``version No.2'' of Gödel's argument as reported
  by  Kanckos
  and Lethen (2019). \label{fig:No2Possibilist}}
\end{figure*}
\end{appendix}

\end{document}